\def\NAT@def@citea{\def\@citea{\NAT@separator}}
\theoremstyle{plain}
\newtheorem{theorem}{Theorem}[section]
\newtheorem{lemma}[theorem]{Lemma}
\theoremstyle{definition}
\theoremstyle{remark}
\numberwithin{equation}{section}
\g@addto@macro{\endabstract}{\@setabstract}
\newcommand{\authorfootnotes}{\renewcommand\thefootnote{\@fnsymbol\c@footnote}}%
\begin{document}
\begin{center}
  \LARGE 
 Accurate approximation of the distributions of the 3D Poisson-Voronoi typical cell geometrical features \par \bigskip

  \normalsize
  \authorfootnotes
  Martina~Vittorietti\footnote{Contact author: m.vittorietti@tudelft.nl}\textsuperscript{1}\textsuperscript{2}, Geurt~Jongbloed\textsuperscript{1},
 Piet J.J.~Kok\textsuperscript{3}, Jilt~Sietsma\textsuperscript{4}\par \bigskip

  \textsuperscript{1}Department of Applied Mathematics, Delft University of Technology, Mekelweg 4, 2628 CD Delft, The Netherlands\par
  \textsuperscript{2}Materials Innovation Institute (M2i), Mekelweg 4, 2628 CD Delft, The Netherlands\par 
\textsuperscript{3}Tata Steel, IJmuiden Technology Centre, Postbus 10.000, 1970 CA, IJmuiden, The Netherlands\par 
\textsuperscript{4}Department of Materials Science and Engineering, Delft University of Technology, Mekelweg 2, 2628 CD Delft, The Netherlands\par \bigskip

\end{center}





\keywords{Voronoi; Poisson-Voronoi diagrams; 3D grain size; parametric approximation}

\date{}

\dedicatory{}

\begin{abstract}
Although Poisson-Voronoi diagrams have interesting mathematical properties, there is still much to discover about the geometrical properties of its grains. Through simulations, many authors were able to obtain numerical approximations of the moments of the distributions of more or less all geometrical characteristics of the grain. Furthermore, many proposals on how to get close parametric approximations to the real distributions were put forward by several authors.  In this paper we show that exploiting the scaling property of the underlying Poisson process, we are able to derive the distribution of the main geometrical features of the grain for every value of the intensity parameter.  Moreover, we use a sophisticated simulation program to construct a close Monte Carlo based approximation for the distributions of interest. Using this, we also determine the closest approximating distributions within the mentioned frequently used parametric classes of distributions and conclude that these approximations can be quite accurate.

\end{abstract}

\section{Introduction}

In the past few years the use of  Voronoi diagrams has rapidly increased. These diagrams represent an appealing structure, especially because they describe various natural processes quite well. In \cite{okabe2009spatial} an extensive list of fields in which Voronoi diagrams are adopted can be found.
Among the many areas of applications of this model, the field of materials science stands out.
In fact, they are now among the most used mathematical models for microstructure characterization and depending on the specific kind of materials, it is possible to use a proper category of Voronoi diagrams.\\
In this paper, we discuss the most basic instance of the model: Poisson-Voronoi diagrams.
In this framework the nuclei or sites are generated by a homogeneous Poisson process with intensity parameter $\lambda$.

Although many interesting mathematical properties of Poisson-Voronoi diagrams are known, there is still much to discover about the distributions of the geometrical characteristics of its grains.
Through simulations, many authors were able to obtain numerical approximations of the moments of the distribution of the volume, of the surface area, of the number of faces and many other geometrical characteristics of the grains. Nevertheless, analytic expressions of the distributions of many of these important features are not known, others are only known via complicated numerically intractable characterizations.
Therefore, various proposals to obtain close approximations to the real distributions were put forward by several authors e.g.\ Lognormal-, Generalized Gamma- and Rayleigh distributions. But as far as we know, there is no theoretical support for preferring one of these distributions.

The aim of this paper is twofold. After explaining that $\lambda$, the intensity parameter of the Poisson process, is the only parameter determining all distributional properties of the geometrical structure of the grain, we show that if we have the distribution of a given geometrical characteristic for $\lambda=1$, the distribution of the same quantity for every value of $\lambda>0$ can be obtained by rescaling. More precisely, we consider volume, surface area and number of faces of the grain, but the approach can be extended to other characteristics.
Secondly, we find a close Monte Carlo based approximation for the previously mentioned geometrical characteristics of the grains and using  it we determine the most closely approximating distribution within the mentioned frequently used parametric classes of distributions.
As said before, several well known probability distributions were used for approximating the grain geometrical characteristics distributions, but in this study we determine the `best' of these.

After briefly reviewing the basic concepts of Voronoi diagrams and the Poisson process in Section \ref{sec:PVdiagrams}, in Section \ref{sec:distrTyp} we explain the scaling property of the Voronoi structure in terms of the intensity parameter and how it can be useful for studying distributional properties of the grain features.
Since the intensity parameter $\lambda$ is the only parameter involved in generating a specific structure, it governs the distribution of all the geometrical characteristics of the Poisson-Voronoi typical cell. Later, we explain how the scaling acts on the different geometrical features and we show an empirical example of what happens changing the scale parameter.
Section \ref{sec:simulres}  describes our simulation approach and produces an accurate Monte Carlo approximation for the distribution of the grain volume and the grain surface area. In fact, we provide the approximate distributions of the volume and of the grain surface area for $\lambda=1$ and we can adapt it for the other values of $\lambda$ using the aforementioned scaling properties.
In Section \ref{sec:parappr}, we study how well the true distributions of the geometrical characteristics can be approximated by some well-known and frequently used probability distributions in this context: the Gamma-, Generalized Gamma- and Lognormal distribution.
Fitting these three distributions and comparing them through statistical measures such as the supremum distance between the Monte Carlo empirical distribution and its parametric approximations and Total Variation distance, we are not only able to identify the best approximation but also to give a measure of error if one of these parametric approximations is used.
Finally, we discuss the possibility to extend our approach according to different Voronoi Diagrams cases, such as Multi-level Voronoi and/or Laguerre Voronoi Diagrams. We want to remark that for the 3D Voronoi diagrams generation we use TATA Steel software and for data analysis {R}.

\section{Poisson-Voronoi Diagrams}
\label{sec:PVdiagrams}
We begin by reviewing the generic definition and the basic properties of the Poisson-Voronoi Diagram.
Given a denumerable set of distinct points in $\mathbb{R}^d$, $\textbf{\textrm{X}}=\{x_i: i\ge1\}$, the Voronoi diagram of $\mathbb{R}^d$ with \textit{nuclei} $\{x_i\}$ (also called \textit{sites} or \textit{generator points}) is a partition of $\mathbb{R}^d$ consisting of cells
  \begin{equation*}\notag
   C_i=\{ y\in\mathbb{R}^d\,:\, \| x_i-y \| \le \|x_j-y\|\,\, for\,\, j\ne i \},\,\,\, i=1,2,\dots
   \end{equation*}
   where $\|\cdot\|$ is the usual Euclidean distance.
   This means that given a set of two or more but finitely many distinct points, we associate all locations in that space with the closest member(s) of the point set with respect to the Euclidean distance.

If we assume that $\textbf{\textrm{X}}=\mathrm{\Phi}=\{x_i\}$ is the realization of a homogeneous Poisson point process, then we will refer to the resulting structure as the \textit{Poisson-Voronoi diagram}, $\mathcal{V}_\Phi$.

We find it useful to remind briefly what a Poisson process is and which are its basic properties. We follow Kingman's approach in the mathematical definition of the process \cite[cf.][]{kingman1993poisson}.

Let $S$ be a measurable set in $\mathbb{R}^d$, $\mathcal{B}_S=\mathcal{B}(S)$ the $\sigma$-field of its Borel sets in $S$ and $\mu$ a boundedly finite Borel measure on $\mathcal{B}_S$. Moreover, denote $N(A)=\texttt{\#}\{i:X_i\in A\}$.
A \textit{Poisson process} on $S$ is then a random countable subset $\Phi$ of $S$, such that
\begin{itemize}
\item for every finite family of disjoint bounded Borel subsets $A_1,A_2,\dots, A_n$ of $S$, the random variables $N(A_1),N(A_2),\dots, N(A_n)$ are independent
\item $N(A)$ has Poisson distribution $\mathcal{P}(\lambda)$, where $\lambda=\mu(A)$ lies in $0\le\lambda\le\infty$.
\end{itemize}
From this it immediately follows that
\begin{equation*}\notag
\mu(A)=\mathbb{E}\{N(A)\}.
\end{equation*}
Therefore the measure $\mu$ on $S$ is often called the \textit{mean measure} of the Poisson process $\Phi$.
When $S=\mathbb{R}^d$, the mean measure is in most interesting cases given in terms of \textit{intensity}.
This is a positive measurable function $\lambda$ on $S$, in terms of which $\mu$ is given by integrating $\lambda$ with respect to $d$-dimensional Lebesgue measure:
\begin{equation}
\label{eq:measMu}
\mu(A)=\int_A \lambda(x) \mathrm{d}x.
\end{equation}
If $\lambda$ is continuous at $x$, then \textit{eq.}\ref{eq:measMu} implies that for small neighbourhoods $A$ of $x$,
\begin{equation*}\notag
\mu(A)\approx\lambda(x)|A|,
\end{equation*}
where $|A|$ denotes the Lebesgue measure (length if $d=1$, area if $d=2$, volume if $d=3$) of A. Thus $\lambda(x)|A|$ is the approximate probability of a point of $\Phi$ falling in the small set $A$, and is larger in regions where $\lambda$ is large than in those where $\lambda$ is small.
In the special case when $\lambda$ is a constant, so that
\begin{equation}
\label{eq:muhom}
\mu(A)=\lambda|A|
\end{equation}
we speak of a \textit{uniform} or \textit{homogeneous} Poisson process.

In this paper, we assume that the sites of the Poisson-Voronoi diagrams are generated according to the particular case described by \textit{eq.}\ref{eq:muhom}.

As mentioned before, our aim is to find the distribution of the geometrical characteristics of the grains. In order to approximate these distributions, we generate a large sample of independent and identically distributed cells, more specifically \textit{typical cells}.
A typical Voronoi cell refers to a random polytope which loosely speaking has the same distribution as a randomly chosen cell from the diagram selected in such a way that every cell has the same chance of being sampled. Moreover, the distribution of the typical Poisson-Voronoi cell is by Slivnyak-Mecke formula \cite{moller1994lectures} the same as the Voronoi cell obtained when the origin is added to the point process $\Phi$. This corresponds formally to
\begin{equation*}\notag
\mathcal{C}=\{y\in\mathbb{R}^d:\|y\|\le\|y-x\|\,\, for \,\, all \, \, x\in\mathrm{\Phi}\}
\end{equation*}

Okabe et al. \cite{okabe2009spatial} synthesize previous research activity about the properties of Poisson Diagrams.
Despite the fact that distributions of several geometrical characteristics are already known, the distributions of the main features, especially in 3D, are not. We describe a simulation approach to approximate these distributions in the next section.

\section{Distribution of the geometrical properties of a typical cell}
\label{sec:distrTyp}
Given the complexity in finding explicit formulae for the distributions of the Poisson-Voronoi tessellation geometrical characteristics, especially in 3D, many authors used Monte Carlo methods to approximate these.
Among them Kiang \cite{kiang1966random}, Kumar and Kurtz \cite{kumar1995monte}, Lorz and Hahn \cite{lorz1993geometric}, M{\o}ller \cite{moller1994lectures}, Tanemura \cite{tanemura2003statistical} obtained numerical results for the moments of the distribution of volume, surface area, and number of faces of the grain in 3D.
They also give histogram estimates of these distributions and suggest approximations for them using various well known probability distributions.
For instance, for the volume distribution, before 1990 most authors used the Lognormal distribution for approximating the grain size distribution in polycrystals. Nowadays, more flexible distributions such as Gamma or Generalized Gamma are commonly used (e.g. \cite{tanemura2003statistical, kumar1995monte}).
Although these models fit the observed data rather well (as we will see in the next section) our approach allows to find an accurate approximation of the true distribution and which parametric distribution optimally fits the data.

The main idea is that, given a Poisson-Voronoi diagram generated by a Poisson point process $\Phi$ with intensity parameter $\lambda$, this $\lambda$ is the only parameter determining the distributions of the geometrical features of the grains.
Furthermore, the dependence of the distributions on the intensity parameter is via simple scaling of a `parent distribution', due to the following important scaling property of the Poisson process.

\begin{lemma}[Scaling Property]
\label{lem:scaling}
Let $\Phi=\{X_1,X_2,\dots\}$ be a Poisson process on $\mathbb{R}^d$ with intensity $\lambda=1$. Choose $\lambda>0$ and define $\Phi_\lambda=\{X_1/\lambda^{1/d}, X_2/\lambda^{1/d}, \dots\}$. Then $\Phi_\lambda$ is a Poisson process with intensity $\lambda$.
\end{lemma}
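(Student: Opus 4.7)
The plan is to verify the two defining properties of a Poisson process (independence on disjoint Borel sets, Poisson-distributed counts with the correct mean) directly for $\Phi_\lambda$, by pulling everything back to the known Poisson process $\Phi$ via the map $x \mapsto \lambda^{1/d} x$.

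First I would introduce the counting random variables $N(B) = \#\{i : X_i \in B\}$ associated with $\Phi$ and $N_\lambda(B) = \#\{i : X_i/\lambda^{1/d} \in B\}$ associated with $\Phi_\lambda$, for $B \in \mathcal{B}(\mathbb{R}^d)$. The key elementary observation is that $X_i/\lambda^{1/d} \in B$ if and only if $X_i \in \lambda^{1/d} B$, where $\lambda^{1/d} B = \{\lambda^{1/d} x : x \in B\}$. Hence
\begin{equation*}
N_\lambda(B) \;=\; N(\lambda^{1/d} B).
\end{equation*}
Since the map $x \mapsto \lambda^{1/d} x$ is a homeomorphism, it sends bounded Borel sets to bounded Borel sets and preserves disjointness, so this identity reduces every distributional question about $\Phi_\lambda$ to one about $\Phi$.

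Next I would check independence. Given disjoint bounded Borel sets $A_1, \ldots, A_n \subset \mathbb{R}^d$, the sets $\lambda^{1/d} A_1, \ldots, \lambda^{1/d} A_n$ are again disjoint bounded Borel sets, so the first defining property of $\Phi$ gives the independence of $N(\lambda^{1/d} A_1), \ldots, N(\lambda^{1/d} A_n)$; by the identity above, this is exactly the independence of $N_\lambda(A_1), \ldots, N_\lambda(A_n)$. For the marginal law, I would use the standard fact that $|\lambda^{1/d} A| = \lambda \, |A|$ for the Lebesgue measure under scaling by $\lambda^{1/d}$ in $\mathbb{R}^d$. Combined with the second defining property of $\Phi$ (unit intensity), this yields
\begin{equation*}
N_\lambda(A) \;=\; N(\lambda^{1/d} A) \;\sim\; \mathcal{P}\!\bigl(|\lambda^{1/d} A|\bigr) \;=\; \mathcal{P}\!\bigl(\lambda\,|A|\bigr),
\end{equation*}
which is precisely the Poisson law with mean $\lambda |A|$ prescribed by \textit{eq.}\ref{eq:muhom}.

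I do not expect any serious obstacle: the entire argument is a change-of-variables on the Borel $\sigma$-field combined with the scaling of Lebesgue measure. The only point that deserves a brief verification is that the image $\lambda^{1/d} B$ of a Borel set is Borel (this follows from the continuity, and indeed the homeomorphism property, of scaling), so that the reductions $N_\lambda(B) = N(\lambda^{1/d} B)$ are legitimate for arbitrary bounded Borel $B$.
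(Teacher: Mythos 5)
Your proof is correct, and it takes a somewhat different route from the paper's. The paper does not verify the defining properties of a Poisson process by hand: it invokes Kingman's Mapping Theorem with the map $f(x)=x/\lambda^{1/d}$ to conclude immediately that $\Phi_\lambda$ is a Poisson process, and then only computes the induced mean measure, $\mu_\lambda(B)=\mu_1(f^{-1}(B))=\lambda\mu_1(B)$. You instead unpack what the Mapping Theorem delivers: the identity $N_\lambda(B)=N(\lambda^{1/d}B)$ reduces both the independence over disjoint bounded Borel sets and the Poisson marginals to the corresponding properties of $\Phi$, with the scaling of Lebesgue measure, $|\lambda^{1/d}A|=\lambda^{(1/d)\cdot d}|A|=\lambda|A|$, supplying the correct mean. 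The computational heart is the same in both arguments --- the preimage of $B$ under $f$ is $\lambda^{1/d}B$, whose Lebesgue measure is $\lambda|B|$ --- but your version is self-contained and elementary, at the price of re-proving a special case of a standard theorem, whereas the paper's is shorter and makes the generality of the mechanism (any measurable map with a nonatomic induced mean measure works) visible. Your attention to the measurability and disjointness-preservation of the scaling homeomorphism is the right thing to check and is handled correctly.
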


\begin{proof}
The fact that $\Phi_\lambda$ is a Poisson process is a special instance of the `Mapping theorem' \cite[see][Section 2.3]{kingman1993poisson}, using states space $S=T=\mathbb{R}^d$ and $f(x_1, x_2, \dots, x_d)=(x_1/\lambda^{1/d}, x_2/\lambda^{1/d}, \dots, x_n/\lambda^{1/d})$. Denoting the mean measure of $\Phi$ (Lebesgue measure) by $\mu_1$, the induced mean measure $\mu_\lambda$ of $\Phi_\lambda$ is given by
\begin{equation*}\notag
\mu_\lambda(B)=\mu_1(f^{-1}(B))=\int_{f^{-1}(B)}\mathrm{d}\mu_1(x)=\lambda\int_{B}\mathrm{d}\mu_1(x)=\lambda\mu_1(B)
\end{equation*}
\end{proof}

In the following subsections Lemma \ref{lem:scaling} will be used to see how the distributions of volume,  surface area and number of faces of the grains depend on the intensity parameter $\lambda$. 

\subsection{Grain Volume}
We first focus our attention on the grain volume distribution because of the direct relationship of this Poisson-Voronoi geometrical characteristic and the grain size distribution in microstructure characterization of materials.

Exploiting the properties of the Poisson process, the distribution for the normalized length of the Voronoi cell in 1D or size measure in 1D, can be shown to have density \cite{meijering1953interface}
\begin{equation*}\notag
f_{1D}(y)=4y \,\mathrm{exp}(-2y)\,\,\,\, 1_{[0,\infty)}(y)
\end{equation*}
In dimension $d>1$, it was conjectured that the area (2D) and the volume (3D) of the typical cell in a Poisson-Voronoi diagram may be distributed as the sum of two and three gamma variables with shape and scale parameters equal to 2 \cite{kiang1966random}, but \cite{Zaninetti20093223} and \cite{ferenc2007size} showed the conjecture to be false.
In 2D an analytic, though computationally challenging result is provided by Calka \cite{calka2003precise}, which gives an expression for the distribution of the area of the typical cell in 2D given the number of vertices.
In 3D, as we know so far, no analytic expression for the volume distribution exists.

\begin{lemma}
\label{lem:scaleVol}
 Denote by $F_\lambda$ the distribution function of the volume (length if $d=1$, area if $d=2$) of the typical cell of the Poisson-Voronoi diagram based on a homogeneous Poisson process on $\mathbb{R}^d$ with intensity parameter $\lambda>0$. Then, for all $x\ge 0$,
\begin{equation}
F_\lambda(x)=F_1(\lambda x)
\end{equation}
\end{lemma}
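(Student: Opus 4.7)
My plan is to combine the scaling property in Lemma \ref{lem:scaling} with the Slivnyak--Mecke representation of the typical cell already recalled earlier in this section. Denote by $\mathcal{C}_\lambda$ the typical Voronoi cell at intensity $\lambda$; by Slivnyak--Mecke, $\mathcal{C}_\lambda$ is distributed as the Voronoi cell of the origin in $\Phi_\lambda \cup \{0\}$, where $\Phi_\lambda$ has intensity $\lambda$. Writing $V_\lambda = |\mathcal{C}_\lambda|$ for its $d$-dimensional Lebesgue volume, $F_\lambda$ is just the CDF of $V_\lambda$, so the target identity $F_\lambda(x) = F_1(\lambda x)$ is equivalent to $V_\lambda \stackrel{d}{=} V_1/\lambda$; the task reduces to producing such a deterministic rescaling.

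The key geometric observation is that the Voronoi construction commutes with a homothety centred at one of its generators. For any locally finite $\Psi \subset \mathbb{R}^d$ containing the origin and any $c > 0$, the defining inequalities $\|y\| \le \|y-x\|$ for $x \in \Psi$ are equivalent, after dividing both sides by $c$, to $\|y/c\| \le \|y/c - x/c\|$ for $x/c \in \Psi/c$. Hence the Voronoi cell of $0$ in $\Psi/c$ equals $(1/c)$ times the Voronoi cell of $0$ in $\Psi$. Taking $c = \lambda^{1/d}$ and invoking Lemma \ref{lem:scaling} to realise $\Phi_\lambda$ as $\Phi/\lambda^{1/d}$ with $\Phi$ of unit intensity (the map $x \mapsto x/\lambda^{1/d}$ fixes $0$ and therefore commutes with adjoining the origin) gives $\mathcal{C}_\lambda \stackrel{d}{=} \lambda^{-1/d}\,\mathcal{C}_1$.

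Passing to $d$-dimensional Lebesgue measure and using its homogeneity of degree $d$ yields $V_\lambda \stackrel{d}{=} V_1/\lambda$, whence
\[
F_\lambda(x) = \mathbb{P}(V_\lambda \le x) = \mathbb{P}(V_1 \le \lambda x) = F_1(\lambda x), \qquad x \ge 0,
\]
which is the claim. I do not foresee a genuine obstacle: the content of the lemma is essentially the Poisson scaling identity together with the fact that the Voronoi cell at a fixed generator is homogeneous of degree one in the point configuration, and that Lebesgue measure is homogeneous of degree $d$. The only point requiring a careful word is that the scaling lemma is stated for $\Phi$ rather than for $\Phi \cup \{0\}$, but this is harmless because the scaling map fixes the added point.
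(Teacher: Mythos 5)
Your proposal is correct and follows essentially the same route as the paper: realise $\Phi_\lambda$ as $\Phi/\lambda^{1/d}$ via Lemma \ref{lem:scaling}, observe that the typical cell then scales as $\mathcal{C}_\lambda \stackrel{d}{=} \lambda^{-1/d}\mathcal{C}$, and conclude $V_\lambda \stackrel{d}{=} V/\lambda$. The only difference is that you explicitly justify the cell-scaling step through the Slivnyak--Mecke representation and the homothety argument, a step the paper simply asserts.
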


\begin{proof}
Let $\Phi$ be a homogeneous Poisson process on $\mathbb{R}^d$ with intensity 1.
Denote by $\mathcal{C}$ the typical cell of the Voronoi diagram based on this process. Fix $\lambda>0$ and consider the homogeneous Poisson process $\Phi_\lambda$ with intensity $\lambda$ as introduced in the statement of \textbf{Lemma \ref{lem:scaling}}. Then the typical cell in the Voronoi diagram based on $\Phi_\lambda$ is a scaled version of the typical cell of the Voronoi diagram based on $\Phi$, in the sense that it is given by $\mathcal{C}_\lambda=\{x/\lambda^{1/d}:x\in\mathcal{C}\}$. This means that the volume $V_\lambda$ of $\mathcal{C}_\lambda$ is exactly $\lambda^{-1}$ times the volume $V$ of $\mathcal{C}$.
Therefore, for $x\ge 0$,

\begin{equation*}\notag
F_\lambda(x)=\mathrm{P}(V_\lambda\le x)=\mathrm{P}\left(\frac{V}{\lambda}\le x\right)=\mathrm{P}(V\le \lambda x)=F_1(\lambda x)
\end{equation*}
\end{proof}

\subsection{Grain Surface area}
\begin{lemma}
\label{lem:scaleSA}
 Denote by $G_\lambda$ the distribution function of the surface area of the typical cell of the Poisson-Voronoi diagram based on a homogeneous Poisson process on $\mathbb{R}^3$  with intensity parameter $\lambda>0$. Then, for all $x\ge 0$,
\begin{equation*}\notag
G_\lambda(x)= G_1\left(\lambda^{\frac{2}{3}}x\right)
\end{equation*}
\end{lemma}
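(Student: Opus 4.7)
The plan is to mirror the proof of Lemma \ref{lem:scaleVol}, replacing volume by surface area and tracking the correct power of $\lambda$ arising from the two-dimensional nature of the boundary in a three-dimensional ambient space. First I would let $\Phi$ be a homogeneous Poisson process on $\mathbb{R}^3$ with intensity $1$ and denote its typical cell by $\mathcal{C}$, with surface area $S$. Fixing $\lambda>0$ and invoking Lemma \ref{lem:scaling}, I get the rescaled process $\Phi_\lambda=\{X_i/\lambda^{1/3}\}$, which is homogeneous Poisson with intensity $\lambda$.

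Next, I would argue as in Lemma \ref{lem:scaleVol} that the typical cell of the Voronoi diagram based on $\Phi_\lambda$ is $\mathcal{C}_\lambda=\{x/\lambda^{1/3}:x\in\mathcal{C}\}$; this is because the map $y\mapsto y/\lambda^{1/3}$ is a bijection of $\mathbb{R}^3$ that preserves the ordering of Euclidean distances, so it sends Voronoi cells of $\Phi$ to Voronoi cells of $\Phi_\lambda$. The central observation is then that a linear dilation by factor $\lambda^{-1/3}$ scales every two-dimensional face of the polytope $\mathcal{C}$ by $\lambda^{-2/3}$; summing over faces gives
\[
S_\lambda = \lambda^{-2/3}\,S,
\]
where $S_\lambda$ is the surface area of $\mathcal{C}_\lambda$.

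The conclusion is then immediate: for $x\ge 0$,
\[
G_\lambda(x)=\mathrm{P}(S_\lambda\le x)=\mathrm{P}(\lambda^{-2/3}S\le x)=\mathrm{P}(S\le \lambda^{2/3}x)=G_1(\lambda^{2/3}x).
\]
There is no real obstacle; the entire content is the quadratic scaling of two-dimensional Hausdorff measure under linear dilation, which here is elementary since the cell is a (random) convex polytope and area is simply the sum of face areas. The same template clearly extends to other $k$-dimensional functionals of the typical cell in $\mathbb{R}^d$, giving a scaling exponent of $k/d$ in place of the exponents $1$ and $2/3$ that appear for volume and surface area in $\mathbb{R}^3$.
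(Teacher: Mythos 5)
Your proof is correct and follows essentially the same route as the paper: invoke the scaling property of the Poisson process, identify the typical cell of $\Phi_\lambda$ as the dilated cell $\mathcal{C}_\lambda$, observe that surface area scales by $\lambda^{-2/3}$, and conclude via the same chain of equalities for $G_\lambda(x)$. The extra detail you supply (the distance-preserving bijection argument and the face-by-face area scaling) only makes explicit what the paper leaves implicit.
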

\begin{proof}
The argument follows the proof of \textbf{Lemma \ref{lem:scaleVol}}. Denote by $S_\lambda$ the surface area of $\mathcal{C}_\lambda$ and note that scaling of $\mathcal{C}_\lambda$ implies that $S_\lambda$ is $\lambda^{-\frac{2}{3}}$ times the surface area of $\mathcal{C}$, $S$.
Therefore
\begin{equation*}\notag
G_\lambda(x)=\mathrm{P}(S_\lambda\le x)=\mathrm{P}\left(\frac{S}{\lambda^{\frac{2}{3}}}\le x \right)=\mathrm{P}\left(S\le \lambda^{\frac{2}{3}}x\right)=G_1\left(\lambda^{\frac{2}{3}}x\right)
\end{equation*}
\end{proof}

\subsection{Number of grain faces}
Finally, another (discrete) property of interest regards the number of grain faces of the typical cell. It is clear that using $\Phi$ or $\Phi_\lambda$ (from \textbf{Lemma \ref{lem:scaling}}) as a basis for the Voronoi diagram, does not make any difference in the number of faces of the typical cell ($\mathcal{C}$ or $\mathcal{C}_\lambda$ respectively), leading to
\begin{lemma}
\label{lem:scaleFaces}
 Denote by $N_\lambda$ the distribution function of the number of faces of the typical cell of the Poisson-Voronoi diagram based on a homogeneous Poisson process on $\mathbb{R}^d$ with intensity parameter $\lambda>0$. Then, for all $x\ge 0$,
\begin{equation*}
N_\lambda(x)=N_1(x)\notag
\end{equation*}
\end{lemma}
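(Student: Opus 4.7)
The plan is to reduce the statement to the scaling coupling already established in Lemma \ref{lem:scaling} and then observe that the number of faces is a combinatorial invariant under dilation. Concretely, I would begin by fixing a unit-intensity Poisson process $\Phi$ on $\mathbb{R}^d$ and letting $\Phi_\lambda=\{x/\lambda^{1/d}:x\in\Phi\}$, which by Lemma \ref{lem:scaling} is a homogeneous Poisson process with intensity $\lambda$. By Slivnyak--Mecke, the typical cell $\mathcal{C}$ of $\mathcal{V}_\Phi$ has the same distribution as the Voronoi cell of the origin after adding $0$ to $\Phi$, and similarly $\mathcal{C}_\lambda$ is the Voronoi cell of the origin in $\Phi_\lambda\cup\{0\}$.

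Next I would record the scaling identity for cells that was already used in the proof of Lemma \ref{lem:scaleVol}: the map $\varphi(x)=x/\lambda^{1/d}$ is a bijection of $\mathbb{R}^d$ that sends $\Phi\cup\{0\}$ to $\Phi_\lambda\cup\{0\}$ and commutes with taking Voronoi cells, so
\begin{equation*}
\mathcal{C}_\lambda=\varphi(\mathcal{C})=\{x/\lambda^{1/d}:x\in\mathcal{C}\}.
\end{equation*}
Since $\varphi$ is an affine bijection (a homothety centered at the origin), it carries the face lattice of the polytope $\mathcal{C}$ onto the face lattice of $\mathcal{C}_\lambda$ bijectively; in particular it induces a bijection between the $(d-1)$-dimensional faces of the two cells.

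From this it follows that, almost surely, the number of faces of $\mathcal{C}_\lambda$ equals the number of faces of $\mathcal{C}$. Since $\mathcal{C}$ is distributed as the typical cell at intensity $1$ and $\mathcal{C}_\lambda$ as the typical cell at intensity $\lambda$, the two face counts have the same law, which gives $N_\lambda(x)=N_1(x)$ for every $x\ge 0$. There is essentially no hard step here: the only thing to be careful about is to invoke the right coupling from Lemma \ref{lem:scaling} (so that $\mathcal{C}$ and $\mathcal{C}_\lambda$ live on the same probability space) and to note that homotheties preserve combinatorial face structure, after which the conclusion is immediate.
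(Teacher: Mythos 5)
Your argument is correct and is exactly the route the paper takes: the paper disposes of this lemma with a one-sentence remark that the scaling coupling of Lemma \ref{lem:scaling} sends $\mathcal{C}$ to $\mathcal{C}_\lambda$ by a homothety, which leaves the number of faces unchanged. You have simply written out that same observation in more detail, so there is nothing to add.
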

The same lemma holds for \textit{number of corner points}, $n_v$.
In fact, exploiting the Euler-Poincar\'e  relation, it is possible to determine $n_v$, when the number of faces is known.
\section{Simulation Results}
\label{sec:simulres}
Now, we intend to approximate the real distribution function of the grain geometrical features, using the results obtained by a simulation based on 1\,000\,000 Voronoi diagrams.
Our simulation approach also allows to determine which of the usual parametric models provides the best approximation to the true distribution and its deviation from it.
We consider the volume, the surface area and the number of faces of a 3D Poisson Voronoi typical cell.
There exist two possible approaches well described in \cite{okabe2009spatial}:
\begin{enumerate}
\item generate a large number of points inside a bounded region $B$ according to $\Phi$, construct $\mathcal{V}_\Phi$ and measure the characteristics of all its cells.
\item generate a sequence of independent typical Poisson Voronoi cells, measure the characteristics of each and then aggregate them to obtain the required distributions.
\end{enumerate}
We follow the second approach. The reason of this choice derives from the convenience of having a sample of independent and identically distributed Voronoi cells such that we can know how close we are  to the real distribution. Moreover, we are able to control and eliminate the boundary effect that is present because the structure is actually only constructed on a bounded region.
For our objective it is important that only the distributions of the geometrical properties of the typical cell are needed using $\lambda=1$ in the simulations. By \textbf{Lemma \ref{lem:scaleVol}, \ref{lem:scaleSA}} and \textbf{\ref{lem:scaleFaces}}, the distributions based on diagrams with different intensities can be obtained by scaling.


We conduct our simulation approach using the software provided by TATA Steel.
 The algorithm is based on the half plane intersection, which is closely related to the original definition of a Voronoi tessellation. Each Voronoi cell is constructed separately by intersecting $n-1$ half spaces. A disadvantage is that this algorithm computes in $O(n^2\log n)$ time \cite{o1998computational}, while the most frequently used incremental algorithms can do it in $O(n^2)$ time. To speed up the computations, the algorithm has been extended with a filter, which determines which neighboring points of a generator point (nucleus) are needed for the Voronoi cell construction of this nucleus point. This filter is built in such a way that it first sorts $\sim 80\%$ of the points which are certainly needed for Voronoi cell construction. After that, the other $\sim 20\%$ of the points are checked to see if they give half plane intersection with the Voronoi cell under construction. With this filter the computational speed is improved to be better than $O(n\log n)$, which is the computational speed of the fastest algorithm by Fortune \cite{f1986sweepline}.\\
Then, we adopt the following Monte-Carlo procedure.\\
Repeat $1\,000\,000$ times:

\begin{description}
\item[Step 1]: Generate a 3D Poisson-Voronoi diagram with added generator point $(0,0,0)$ with $\lambda=1$;
\item[Step 2]: Determine the geometrical characteristics of the realizations of the typical Voronoi cell, the cell that contains the point $(0, 0, 0)$, $\mathcal{C}(0)$;
\end{description}
Then, aggregate the $1\,000\,000$ values. \\
The main graphical results are shown in Figures \ref{fig:volDensDistr}, \ref{fig:SADensDistr} and \ref{fig:FacesDensDistr}.

The values of the estimated densities of the previously mentioned geometrical characteristics are given  at \href{http://dutiosb.twi.tudelft.nl/~martina/}{http://dutiosb.twi.tudelft.nl/~martina.vittorietti/}. An R-package is under construction.
In Tables \ref{tab:moments} and \ref{tab:freq}, we report the estimated moments of the main geometrical characteristics and the estimated probabilities for the number of faces.
They are coherent with both the theoretical and numerical results obtained by other authors \cite{tanemura2003statistical,kumar1995monte}.

\clearpage
\begin{figure}[!h]
\centering
\subfloat[]{\includegraphics[width=7cm]{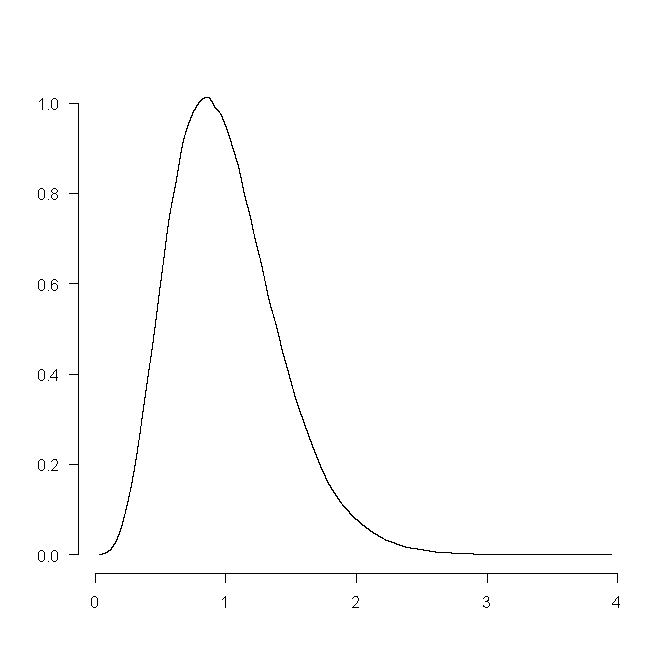}}
\subfloat[]{\includegraphics[width=7cm]{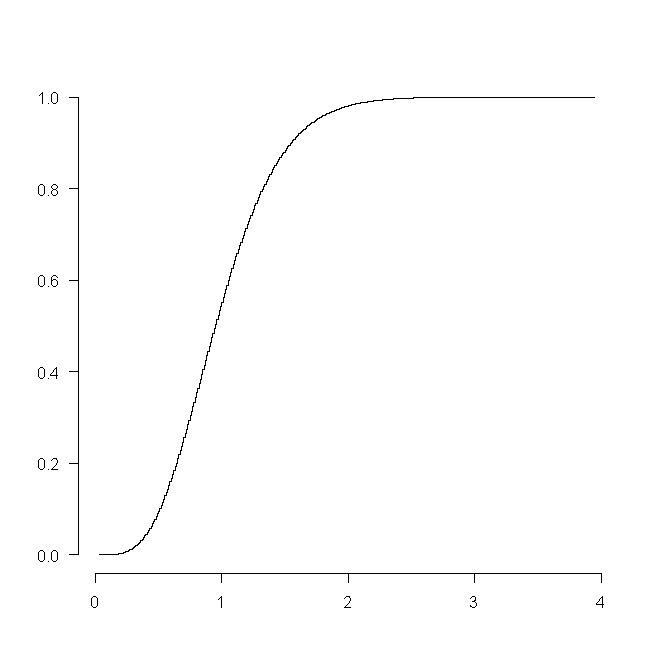}}
\caption{(a) Kernel density estimate (Epanechnikov kernel, cross validation bandwidth $h=0.05$) and (b) empirical cumulative distribution function of volume of 1\,000\,000 Poisson-Voronoi typical cells, $\lambda=1$}
\label{fig:volDensDistr}
\end{figure}

\begin{figure}[!h]
\centering
\subfloat[]{\includegraphics[width=7cm]{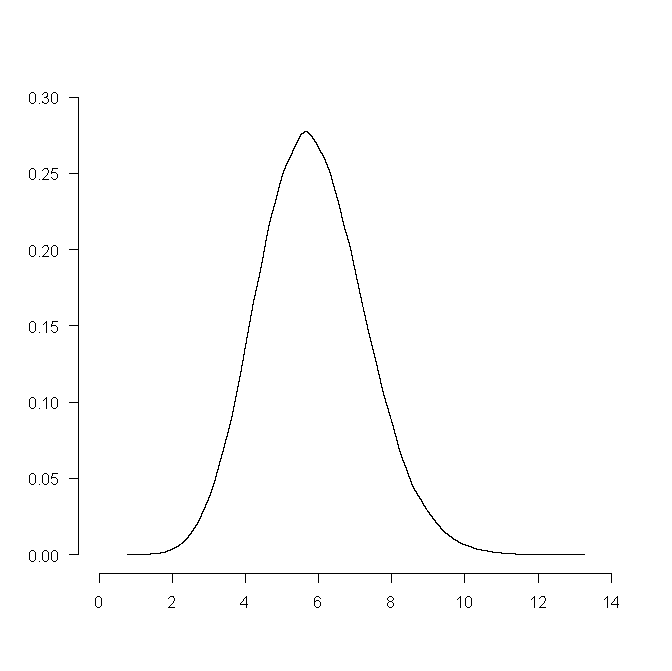}}
\subfloat[]{\includegraphics[width=7cm]{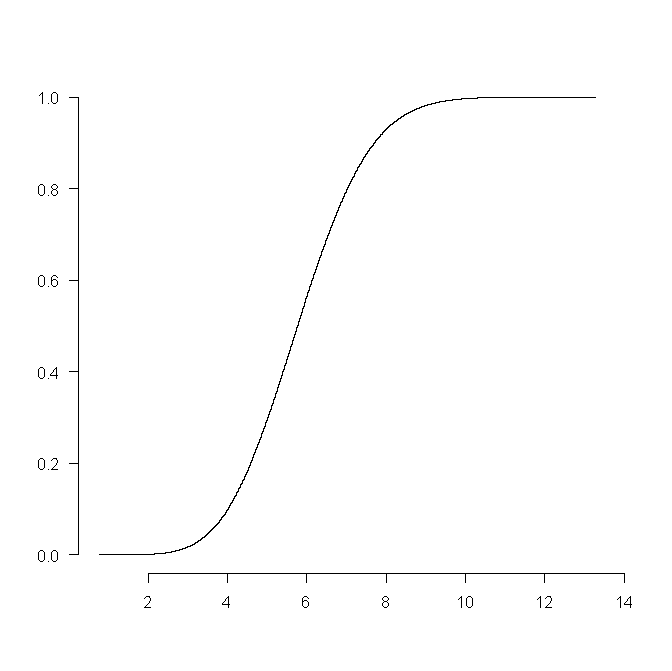}}
\caption{(a) Kernel density estimate (Epanechnikov kernel, cross validation bandwidth $h=0.25$) and (b) empirical cumulative distribution function of surface area of 1\,000\,000 Poisson-Voronoi typical cells, $\lambda=1$}
\label{fig:SADensDistr}
\end{figure}

\begin{figure}[!h]
\centering
\subfloat[]{\includegraphics[width=7cm]{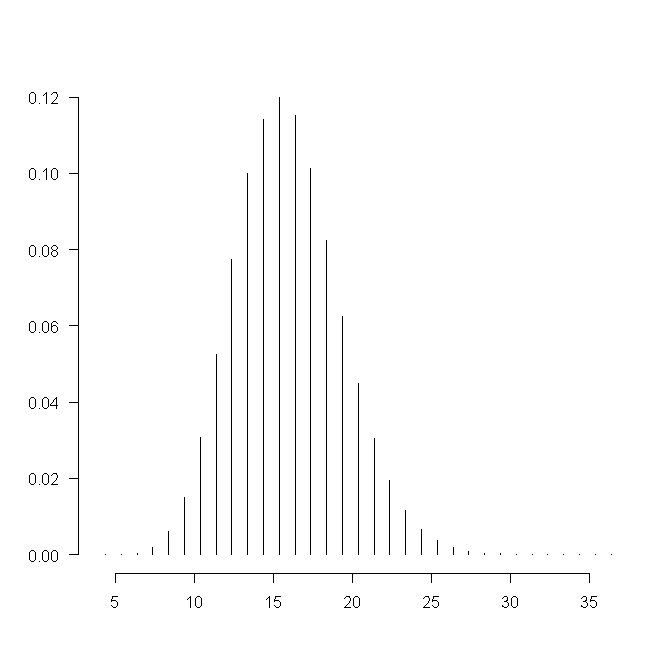}}
\subfloat[]{\includegraphics[width=7cm]{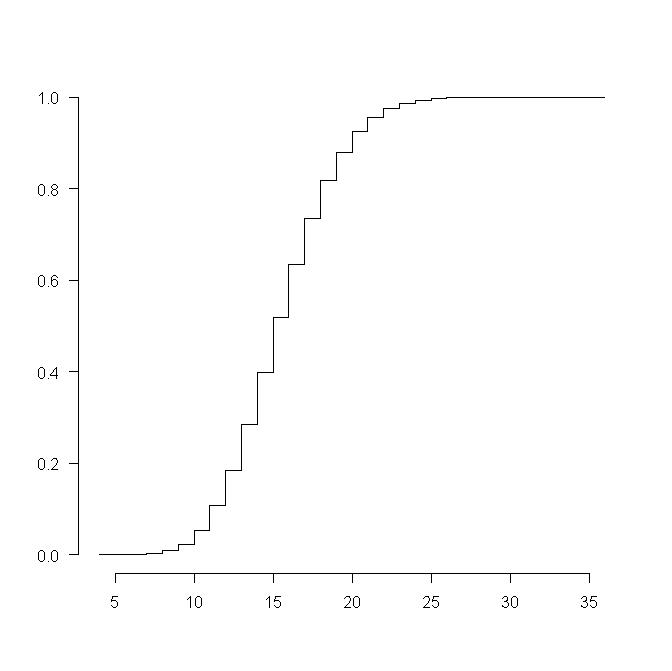}}
\caption{(a) Relative frequencies and (b) empirical cumulative distribution function of number of faces of 1\,000\,000 Poisson-Voronoi typical cells, $\lambda=1$}
\label{fig:FacesDensDistr}
\end{figure}

\begin{table}[!h]
\centering
\caption{Estimated moments of the geometrical features of 1\,000\,000 Poisson-Voronoi typical cells, $\lambda$=1}
\subfloat[][\hbox{Volume}]
{\begin{tabular}{rr}
  \hline
 $\mu_1$ & 1.00008  \\
  $\sigma$ &  0.41189 \\
  $\mu_2$ & 1.16981  \\
  $\mu_3$ & 1.55900 \\
  $\mu_4$ &  2.32340 \\
   \hline
\end{tabular}}
\subfloat[][\hbox{Surface area}]
{\begin{tabular}{rr}
  \hline
$\mu_1$ & 5.82670 \\
  $\sigma$ & 1.43821 \\
  $\mu_2$ & 36.01888 \\
  $\mu_3$ & 234.69091 \\
  $\mu_4$ & 1603.48468 \\
   \hline
\end{tabular}}
\subfloat[][\hbox{Number of faces}]
{\begin{tabular}{rr}
  \hline
$\mu_1$ & 15.53071 \\
 $\sigma$ & 3.33896 \\
  $\mu_2 $& 252.35173 \\
  $\mu_3 $& 4277.80397 \\
  $\mu_4 $& 75464.60519 \\
   \hline
\end{tabular}}
\label{tab:moments}
\end{table}
\newpage

\begin{table}[!h]
\centering
\caption{Distribution of the number of faces F of 1\,000\,000 Poisson-Voronoi typical cell,  $\lambda=1$}
\begin{tabular}{rrr}
  \hline
 F & $n_f$ & $p_f$ \\
  \hline
4 & 5 & 0.000005 \\
  5 & 35 & 0.000035 \\
  6 & 316 & 0.000316 \\
  7 & 1822 & 0.001822 \\
  8 & 6190 & 0.006190 \\
  9 & 15051 & 0.015051 \\
  10 & 30685 & 0.030685 \\
  11 & 52528 & 0.052528 \\
  12 & 77421 & 0.077421 \\
  13 & 100094 & 0.100094 \\
  14 & 114163 & 0.114163 \\
  15 & 120015 & 0.120015 \\
  16 & 115188 & 0.115188 \\
  17 & 101151 & 0.101151 \\
  18 & 82277 & 0.082277 \\
  19 & 62408 & 0.062408 \\
  20 & 44944 & 0.044944 \\
  21 & 30477 & 0.030477 \\
  22 & 19466 & 0.019466 \\
  23 & 11682 & 0.011682 \\
  24 & 6756 & 0.006756 \\
  25 & 3631 & 0.003631 \\
  26 & 1890 & 0.001890 \\
  27 & 975 & 0.000975 \\
  28 & 435 & 0.000435 \\
  29 & 224 & 0.000224 \\
  30 & 95 & 0.000095 \\
  31 & 52 & 0.000052 \\
  32 & 18 & 0.000018 \\
  33 & 3 & 0.000003 \\
  34 & 1 & 0.000001 \\
  35 & 1 & 0.000001 \\
  36 & 1 & 0.000001 \\
   \hline
\end{tabular}
\label{tab:freq}
\end{table}

\clearpage
\section{Parametric approximations to the distributions}
\label{sec:parappr}
Various proposals on how to estimate the distributions of the geometrical properties of the typical cell were put forward by several authors such as the Lognormal distribution \cite{smith1964mathematical,soderlund1998lognormal}, Generalized Gamma distribution with 2 \cite{vaz1988grain} or 3 parameters  \cite{tanemura2003statistical} and Rayleigh distribution \cite{pande1987stochastic}. Ferenc and N{\'e}da \cite{ferenc2007size} propose their own function for the volume distribution. Anyway, nobody was able to find an analytic expression for the grain geometrical characteristics distributions.

As noted in \cite{vaz1988grain} the use of the Lognormal distribution function for approximating grain size distribution lacks a solid physical basis and is not in general accurate.
Nowadays, the debate regards mostly the Generalized Gamma Distribution with 2 or 3 parameters, but until now no physical explanation for using one preferential distribution exists.
However, in view of the scaling properties described in the previous sections, it is natural to think that the distributions of the geometrical characteristics of the grain belong to a scale parametric family of distributions. Only then the distributions of the quantities {\it for all $\lambda$} can belong to the class.
Let us look, for instance, at the Lognormal distribution. Its probability density is given by
\begin{equation*}\notag
f(x|\mu,\sigma)=\frac{1}{x\sigma\sqrt{2\pi}}e^{-\frac{(\log(x)-\mu)^2}{2\sigma^2}}.
\end{equation*}
Let $\hat{\mu}_1$ and $\hat{\sigma}_1$ be the maximum likelihood estimates when $\lambda=1$ (based on the 1,000,000 simulated values). Define $f_\lambda(x)$ as:
\begin{equation*}\notag
f_\lambda(x)=\lambda f(\lambda x|\hat{\mu}_1,\hat{\sigma}_1)=\frac{1}{x\hat{\sigma}_1\sqrt{2\pi}}e^{-\frac{(\log(x)+log(\lambda)-\hat{\mu}_1)^2}{2\hat{\sigma}_1^2}}
\end{equation*}
which corresponds to a Lognormal distribution with parameter vector $(\hat{\mu}_1-log(\lambda),\hat{\sigma}_1^2)$.
Therefore, we have a log-addition scaling on the first parameter, which is not consistent with the $\lambda$-scaling that is found for real distributions.
Now let us consider the Generalized Gamma distribution. Its density function, parameterized according to \cite{stacy1965parameter}, is given by
\begin{equation}
\label{eq:paramGG}
f(x|a,b,k)=\frac{b x^{bk-1}}{\Gamma(k)a^{bk}}e^{-\left(\frac{x}{a}\right)^b}
\end{equation}
where $a$  and $b$ are the shape and the scale parameters.
Let $\hat{a}_1$, $\hat{b}_1$ and $\hat{k}_1$ be the maximum likelihood estimates for the parameters (based on the 1,000,000 simulated values) when $\lambda=1$.
Define $f_\lambda(x)$ as equal to:
\begin{equation*}\notag
f_\lambda(x)=\lambda f(\lambda x|\hat{a}_1,\hat{b}_1,\hat{k}_1)=\frac{\hat{b}_1x^{\hat{b}_1\hat{k}_1-1}}{\Gamma(\hat{k}_1)}\left(\frac{\lambda}{\hat{a}_1}\right)^{\hat{b}_1\hat{k}_1}e^{-\left(\frac{\lambda}{\hat{a}_1}x\right)^{\hat{b}_1}}
\end{equation*}
which corresponds to a Generalized Gamma with parameters $(\frac{\hat{a}_1}{\lambda},\hat{b}_1,\hat{k}_1)$. This suggests to look for a distribution that belongs to this scaling family.
 Special cases of this family are Gamma distribution with parameters $a$, $k$ and $b=1$; Weibull distribution with parameters $a$, $b$ and $k=1$.
Beside the parametrization in  \textit{eq}.\ref{eq:paramGG}, another one is provided by Prentice \cite{prentice1974log}. This is in general more stable in the estimation of the parameters but both parametrizations lead to the same estimates.
In the next subsections, we report the estimated parameters of the best Generalized Gamma approximations and we statistically compare the fits based on the Gamma distribution,  the Generalized Gamma distribution and the Lognormal distribution.
This comparison is based on two criteria:
\begin{itemize}
\item Supremum distance between two distribution functions:
\begin{equation}
D(F,G)=\sup_{x\in\mathbb{R}}|F(x)-G(x)|
\end{equation}
This distance is computed between the empirical distribution function $F_n$ of the sample and the maximum likelihood fit within the respective parametric families. The data are the 1,000,000 simulated values from the distribution of interest, with $\lambda=1$.
\item Total Variation distance between distributions with distribution functions $F$ and $G$ and densities $f$ and $g$ respectively on $\mathbb{R}$:
\begin{equation}
TV(F,G):=\sup_{A\in\mathcal{B}}|\int_A dF(x)-\int_A dG(x)|=\frac12\int |f(x)-g(x)|\,dx
\end{equation}
This distance is computed between the kernel estimate of the densities and maximum likelihood parametric fits based on the simulation results with $\lambda=1$.
\end{itemize}
Note that information on these distances are based on the data obtained with $\lambda=1$. If the estimates for more general values of $\lambda$ are obtained via the rescaling, these distances do, however, not change under this scaling, so the distances also hold for the other values of $\lambda$.

\begin{figure}[!h]
\centering
\includegraphics[width=8cm]{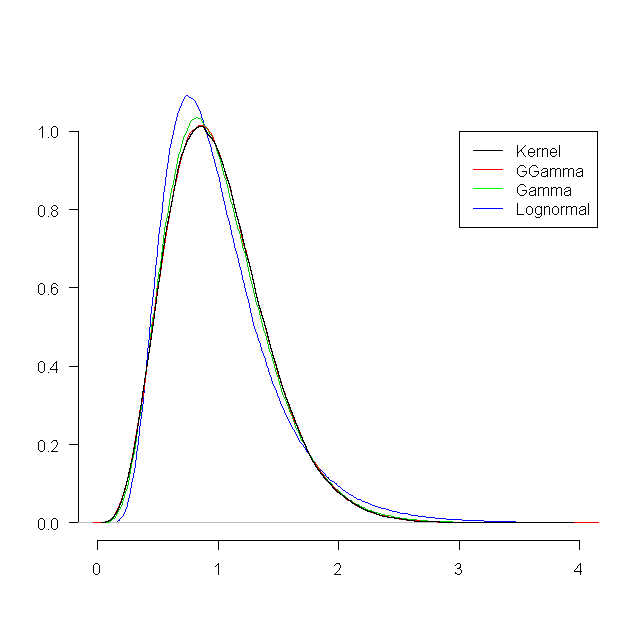}
\caption{Comparison of parametric approximations to the volume distribution of 1\,000\,000 Poisson-Voronoi typical cells, $\lambda=1$}
\end{figure}

\begin{figure}[!h]
\centering
\subfloat[]{\includegraphics[width=7cm]{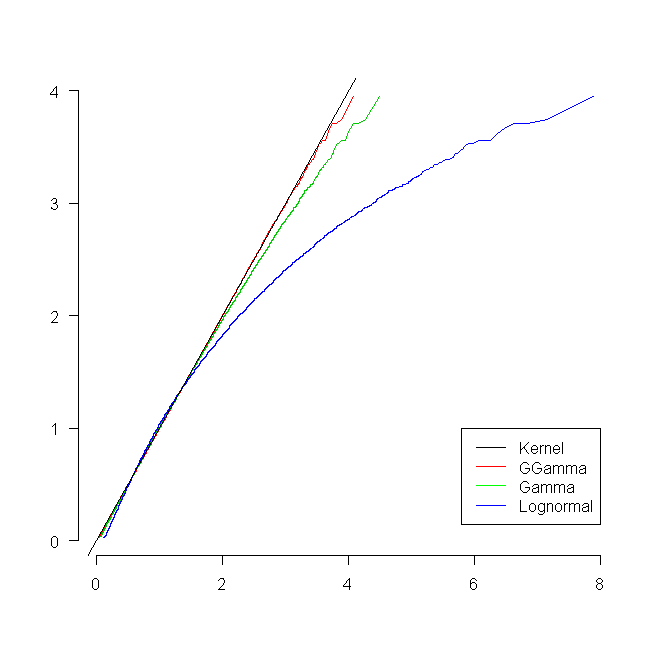}}
\subfloat[]{\includegraphics[width=7cm]{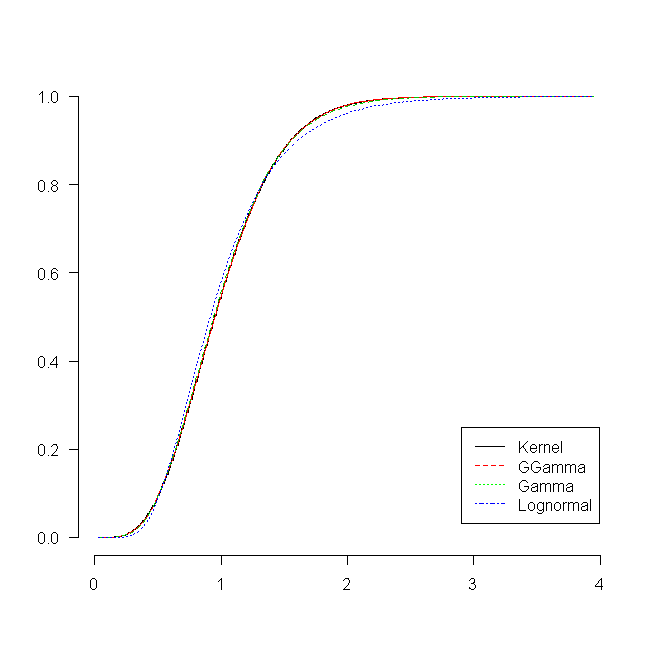}}
\caption{(a) QQplot and (b) cumulative distribution function comparison of parametric approximations to the volume distribution of 1\,000\,000 Poisson-Voronoi typical cells, $\lambda=1$}
\end{figure}

\begin{table}[!h]
\centering
\caption{Estimated Generalized Gamma parameters for volume distribution approximation, $\lambda$=1}
\begin{tabular}{lrrr}

&            $\hat{a}$ &$\hat{b}$&$\hat{k}$\\
\hline
Estimate&             0.380& 1.287  &3.583\\
Std. Error       &    0.005&       0.006&0.0322\\
 \hline
\end{tabular}
\end{table}

\begin{table}[!h]
\centering
\caption{Comparison of Gamma-, Generalized Gamma- and Lognormal approximations for volume distribution in terms of Supremum- and Total Variation distance}
\begin{tabular}{lrrr}

 &            Gamma &Generalized Gamma&Lognormal\\
  \hline
Supremum distance&      0.013  &         0.005&0.041\\
TV distance   &      0.018   &       0.005&0.089\\
 \hline
\end{tabular}
\end{table}

\begin{figure}[!h]
\centering
\includegraphics[width=8cm]{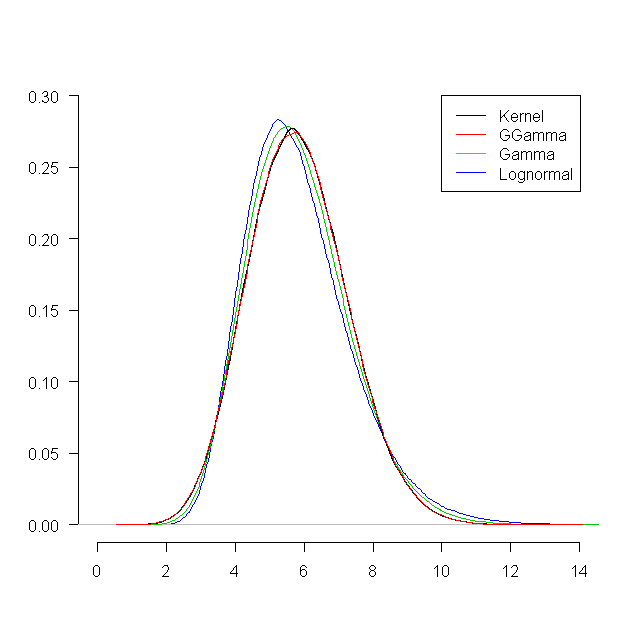}
\caption{Comparison of parametric approximations to the surface area distribution of 1\,000\,000 Poisson-Voronoi typical cells, $\lambda=1$}
\end{figure}

\begin{figure}[!h]
\centering
\subfloat[]{\includegraphics[width=7cm]{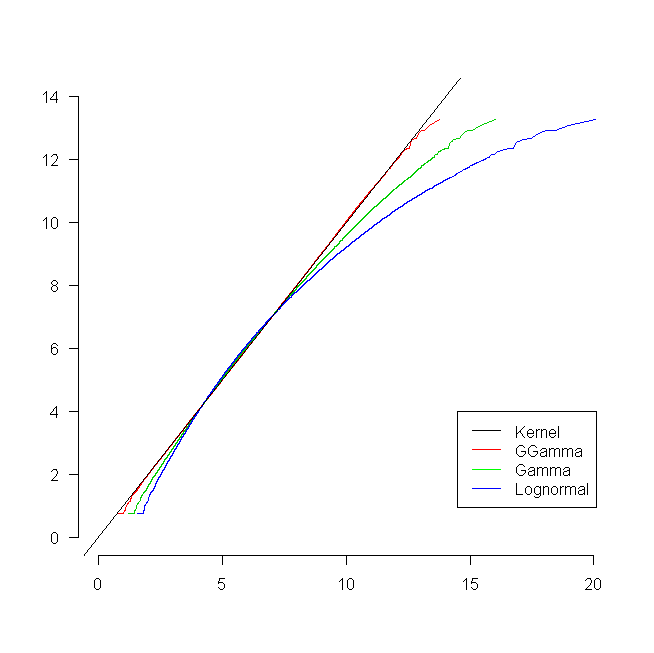}}
\subfloat[]{\includegraphics[width=7cm]{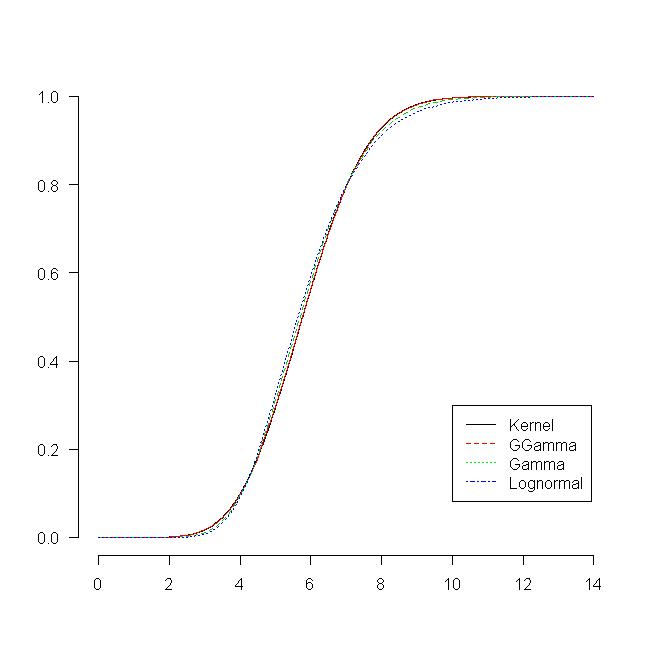}}
\caption{(a) QQplot and (b) cumulative distribution function comparison of parametric approximations to the surface area of 1\,000\,000 Poisson-Voronoi typical cells, $\lambda=1$}
\end{figure}

\begin{table}[!h]
\centering
\caption{Estimated Generalized Gamma parameters for surface area distribution approximation}
\begin{tabular}{lrrr}

&            $\hat{a}$ &$\hat{b}$&$\hat{k}$\\
\hline
Estimate&             3.174&2.102  &3.839\\
Std. Error   &          0.025& 0.011    &0.036\\
 \hline
\end{tabular}
\end{table}
\clearpage
\begin{table}[t]
\centering
\caption{Comparison of Gamma-, Generalized Gamma- and Lognormal approximations for surface area distribution in terms of Supremum- and Total Variation distance}
\begin{tabular}{lrrr}
  \hline
               &     Gamma  & Generalized Gamma&Lognormal\\
Supremum distance  &     0.020   &          0.002&0.037\\
TV distance      &    0.035    &          0.003&0.082\\
\hline
\end{tabular}
\end{table}

\section{Results and Discussion}
This work provides a very accurate approximation for the distributions of all the main geometrical characteristics of Poisson-Voronoi typical cell.
It is now possible to exploit our approximated distribution for generating observations of approximately every geometrical characteristics that define the grain size and for every possible value of $\lambda$.
Moreover, we prove that the Generalized Gamma distribution, with parameter $\hat{a}=0.380$, $\hat{b}=1.287$ $\hat{k}=3.583$ for the volume and $\hat{b}=3.174$, $\hat{a}=2.102$, $\hat{k}=3.839$  for the surface area is the best approximation in the class of the commonly used parametric distributions for grain size distributions.
Nevertheless, it is not the true underlying distribution.
In fact, the interpretation of the total variation distance as measure of quality allow us to say that using Generalized Gamma approximation for the grain volume distribution we could commit an error of about $0.5\%$ and about $0.3\%$ for the grain surface area. \\

As noted in the introduction, Poisson-Voronoi diagrams are interesting and widely applied, but for modeling microstructures only constitute the most basic case. Their use in microstructure characterization is not fully evaluated yet. 
Therefore, we want to test their applicability and then extend our work to more general and less understood Voronoi structures, such as Multi-level Voronoi diagrams \cite{p2015kok} or Laguerre-Voronoi tessellations \cite{l2008rlaguerre} in which the convenient scaling properties present in the Poisson-Voronoi diagrams are less obvious.

\section*{Acknowledgements}
This research was carried out under project number S41.5.14547b in the framework of the Partnership Program of the Materials innovation institute M2i (www.m2i.nl) and (partly) financed by the Netherlands Organisation for Scientific Research (NWO).

\bigskip
\noindent
\end{document}